\DeclareFontFamily{TU}{ptm}{}\DeclareFontShape{TU}{ptm}{m}{n}{<->ssub*lmr/m/n}{}\DeclareFontShape{TU}{ptm}{bx}{n}{<->ssub*lmr/m/n}{}\DeclareFontShape{TU}{ptm}{m}{it}{<->ssub*lmr/m/n}{}\DeclareFontShape{TU}{ptm}{bx}{it}{<->ssub*lmr/m/n}{}
\par
\documentclass[journal,twoside,web]{arxiv-color} \usepackage{generic,lcsys} \par
\usepackage[T1]{fontenc} \usepackage{amsmath,amssymb,amsfonts} \usepackage{graphicx,booktabs,cite,algorithm,algorithmic} \usepackage[hidelinks]{hyperref} \newtheorem{theorem}{Theorem} \newtheorem{lemma}{Lemma} \newtheorem{proposition}{Proposition} \newtheorem{assumption}{Assumption} \newcommand{\aftertheoremdisplay}{\par
\addvspace{8pt plus 2pt minus 1pt}} \DeclareMathOperator{\tr}{tr} \DeclareMathOperator{\rank}{rank} \renewenvironment{proof}{\par
\noindent\textit{Proof:}\ }{\unskip\nobreak\hfill\penalty50\hskip1em\hbox{}\nobreak\hfill$\square$\par
}  \title{Globally Certified Invariant-Ellipsoid\\Control from Data} \author{Ngoc~Tuan~Dinh, Egor~Dogadin, and Alexey~Peregudin\thanks{Ngoc Tuan Dinh is with ITMO University, St. Petersburg, Russia (email: dinhngoctuan6789@gmail.com).}\thanks{Egor Dogadin is with ITMO University, St. Petersburg, Russia (email: egor.dogadin@icloud.com).}\thanks{Alexey Peregudin is with the School of Electrical and Electronic Engineering, University of Sheffield, Sheffield, United Kingdom (e-mail: a.peregudin@sheffield.ac.uk).}\endgraf} \hypersetup{pdftitle={Globally Certified Invariant-Ellipsoid Control from Data},pdfauthor={Ngoc Tuan Dinh, Egor Dogadin, Alexey Peregudin}} \AtBeginDvi{}\hbadness=10000\vbadness=10000
\begin{document} \maketitle\thispagestyle{empty} \begin{abstract} This letter develops a data-based method for designing state feedback for a discrete-time linear system under bounded disturbances. For each admissible feedback gain and scalar design parameter, a Lyapunov equation determines an invariant ellipsoid: a region that the state cannot leave under the permitted disturbances. We optimise the gain and parameter to minimise a trace-based measure of the resulting output enclosure. At each parameter value, value iteration gives a lower cost bound, while separate controller evaluation gives an achievable upper bound. These bounds also let us assess whole parameter intervals without evaluating every point. We prove that the search stops after finitely many evaluations with a stabilising state-feedback gain whose objective is within any prescribed absolute tolerance of the infimum over the chosen ellipsoid family. Neither an initially stabilising gain nor attainment of the infimum is assumed. The guarantee assumes exact arithmetic and a sufficiently informative batch of exact measurements, including disturbances during data collection; the resulting feedback uses only the state. A position--velocity example illustrates the bounds, controller checks, and computational cost. \par
\end{abstract} \par
\begin{IEEEkeywords} Data driven control, optimal control, computational methods, stability of linear systems. \end{IEEEkeywords} \par
\section{Introduction} Persistent disturbances can have a known amplitude bound without a useful stochastic model or finite energy. A controller can then be assessed by a set enclosing its possible responses. Invariant ellipsoids provide tractable geometric bounds through Lyapunov equations or inequalities, as in classical LMI-based design~\cite{Boyd1994,Nazin2007,Khlebnikov2011}. Recent duality results yield Riccati equations for the corresponding trace criterion in continuous time~\cite{Peregudin2024} and discrete time~\cite{Dogadin2024}. Each Riccati solution is conditional on a scalar ellipsoid parameter, leaving a further optimisation problem. \par
Direct data-based formulas provide stabilising and optimal feedback from informative trajectories~\cite{DePersis2020}; informativity characterises when a batch determines a control property~\cite{vanWaarde2020}. Sets of compatible models can represent bounded unmeasured errors~\cite{Bisoffi2022}. Data-based invariant-set synthesis also includes polytopic constructions that maximise an admissible region under state and input constraints~\cite{Mejari2023}. Here, we minimise the output trace of a specified invariant-ellipsoid family using exact measurements of the disturbances present during collection. \par
Bellman and Q-learning methods provide another route. Efficient offline Q-learning for discrete-time LQR includes data-based initial stabilising policies and comparisons with identify-then-design~\cite{Lopez2023}. Value iteration can avoid an initially admissible policy in continuous-time adaptive optimal control~\cite{Bian2016}; Q-learning also addresses completely unknown discrete-time LQR systems~\cite{Fan2025}. For invariant ellipsoids, Dinh et al.~\cite{Dinh2025} combine continuous-time policy iteration with parameter adaptation. We retain the discrete-time trace criterion of~\cite{Dogadin2024} and ask when computation can stop with a certified error relative to its full-domain infimum. \par
Asymptotic convergence alone does not answer this question. A finite-horizon critic is a lower value bound; its greedy policy can perform worse or be unstable. General value-iteration stopping theory already connects finite computation, stability, and near-optimality~\cite{Granzotto2021,Granzotto2026}. The additional issue here is to certify the returned invariant ellipsoid while accounting for every unexplored parameter interval, including a possible unattained boundary optimum. \par
Our main result is a finite global certificate. Exact data collected under measured disturbances are used to reconstruct the Bellman operator through deterministic virtual probes, and policy evaluation supplies admissibility and objective bounds. Monotonicity of the normalised value matrix yields interval lower bounds and a computable upper-tail exclusion. An explicit interval-width estimate then proves that best-first bisection terminates for any absolute tolerance, without an interior-minimiser assumption. The result is a stabilising gain, its data-derived invariant ellipsoid, and a gap to the global infimum. The same outer certificate can be combined with a Riccati solver; matched numerical comparisons quantify the computational cost. \par
\section{Problem and Exact Data} \label{sec:problem} Consider the discrete-time linear system \begin{equation} x_{k+1}=Ax_k+Bu_k+Ew_k,\qquad z_k=Cx_k+Du_k, \label{eq:plant} \end{equation} where $x_k\in\mathbb R^n$, $u_k\in\mathbb R^m$, $w_k\in\mathbb R^r$, and $z_k\in\mathbb R^p$. Disturbances satisfy $\|w_k\|_2\leq1$ at every step. We seek a static feedback $u_k=Kx_k$. Write $F_K=A+BK$ and $C_K=C+DK$; $\rho(\cdot)$ denotes spectral radius, $\tr(\cdot)$ trace, and $\succeq$ the ordering of symmetric matrices. Vector norms are Euclidean. \par
\subsection{Which ellipsoid is optimised?} An invariant set contains the next state whenever it contains the current state, for every permitted disturbance. For $P\succeq0$, define \begin{equation} \mathcal E(P)=\{P^{1/2}v:\|v\|_2\leq1\}. \label{eq:ellipsoid} \end{equation} When $P\succ0$, this is the familiar ellipsoid $x^\top P^{-1}x\leq1$. A singular $P$ describes an ellipsoid in a lower-dimensional subspace. \par
For a gain and parameter satisfying $\rho(F_K)^2<\alpha<1$, the unique solution of \begin{equation} P=\alpha^{-1}F_KPF_K^\top+(1-\alpha)^{-1}EE^\top\label{eq:primal} \end{equation} is positive semidefinite and defines an invariant ellipsoid. We call such a pair $(\alpha,K)$ admissible. Thus $P$ is determined by the pair, rather than chosen independently; different admissible values of $\alpha$ can give different ellipsoids for the same gain. The output image is $C_K\mathcal E(P)$. Its squared semiaxis lengths sum to \begin{equation} J(\alpha,K)=\tr(C_KPC_K^\top). \label{eq:objective} \end{equation} Thus $J$ measures the size of a guaranteed output enclosure under persistent bounded disturbances. It also bounds $\|z_k\|_2^2$ for states in the ellipsoid. We optimise this trace within family~\eqref{eq:primal}; it is neither ellipsoid volume nor the exact induced peak gain. \par
Define \begin{equation} f(\alpha)=\min_{K:\rho(F_K)^2<\alpha}J(\alpha,K), \qquad J_\star=\inf_{0<\alpha<1}f(\alpha). \label{eq:optimum} \end{equation} The infimum is essential: for $A=0$, $B=E=1$, and $z=[x\ u]^\top$, $f(\alpha)=1/(1-\alpha)$ approaches $1$ only as $\alpha\downarrow0$. Given $\delta>0$, our goal is to return an admissible gain and its ellipsoid with $J(\widehat\alpha,\widehat K)-J_\star\leq\delta$, together with computable lower and upper bounds proving this gap. \par
\subsection{Information available to the computation} A finite batch of one-step measurements is arranged as $\mathcal D=(X,U,W,X^+,Z)$, with the samples as columns, so \begin{equation} X^+=AX+BU+EW,\qquad Z=CX+DU. \label{eq:data} \end{equation} Samples may come from a trajectory or several episodes. The disturbance is measured during collection, along with the state, input, and regulated output; it is not measured by the deployed feedback. No numerical plant matrices need be supplied. \par
\begin{assumption}\label{ass:main} The pair $(A,B)$ is controllable, $(C,A)$ is observable, $C^\top D=0$, $D^\top D\succ0$, and $E\ne0$. The measurements satisfy~\eqref{eq:data} exactly, and \begin{equation} M=\begin{bmatrix}X\\U\\W\end{bmatrix},\qquad\rank M=n+m+r. \label{eq:rank} \end{equation} \end{assumption} \aftertheoremdisplay Controllability and observability ensure that the scaled LQR problem below is well posed for every positive $\alpha$. The computation is offline; it requires neither an initial stabilising policy nor application of intermediate gains. Safe exploration and unmeasured learning errors are outside this information pattern. \par
Let $M^\dagger$ be the Moore--Penrose inverse. For $v\in\mathbb R^{n+m}$, form \begin{equation} g(v)=M^\dagger\begin{bmatrix}v\\0\end{bmatrix},\quad t(v)=X^+g(v),\quad o(v)=Zg(v). \label{eq:virtual} \end{equation} Because $Mg(v)=[v^\top\ 0]^\top$, the data equations imply $t(v)=[A\ B]v$ and $o(v)=[C\ D]v$. These are nominal virtual transitions: linear combinations remove the measured disturbance before the output is squared. Likewise, \begin{equation} \begingroup\setlength{\arraycolsep}{1pt} E\!=\!X^+M^\dagger\!\!\begin{bmatrix}0\\I_r\end{bmatrix}\!\!,\, F_K\!=\!X^+M^\dagger\!\!\begin{bmatrix}I_n\\K\\0\end{bmatrix}\!\!,\, C_K\!=\!ZM^\dagger\!\!\begin{bmatrix}I_n\\K\\0\end{bmatrix}\!\!. \endgroup\label{eq:maps} \end{equation} Thus the disturbance channel and each candidate closed-loop map are recovered explicitly. Assumption~\ref{ass:main} also permits exact identification of $A,B,E,C,D$. \par
To reconstruct a quadratic form, evaluate it on coordinate probes. If $h(v)=v^\top Hv$, the probes $e_i$ and $e_i+e_j$ give \begin{equation} H_{ii}=h(e_i),\quad H_{ij}=\tfrac12\bigl(h(e_i+e_j)-h(e_i)-h(e_j)\bigr). \label{eq:polarization} \end{equation} These deterministic probes reconstruct every symmetric $H$ and impose no additional quadratic-feature rank condition. They are synthetic combinations of the same batch, not extra measurements. The right inverse and virtual probes are computed once and reused for every parameter and iteration. \par
\section{A Finite Policy Certificate} \label{sec:policy} For a fixed ellipsoid parameter $\alpha$, value iteration provides lower bounds on the optimal cost, while separate evaluation of admissible controllers provides upper bounds. \par
\subsection{Building a lower cost bound} Value iteration builds the cost one time step at a time. Each update adds the current output penalty to the scaled estimate of future cost, then minimises over the input. Starting from zero gives lower values because a finite horizon omits nonnegative future penalties. \par
For $0<\alpha\leq1$ and $S\succeq0$, define the Bellman matrix by \begin{equation} v^\top H_\alpha(S)v=\|o(v)\|^2+\alpha^{-1}t(v)^\top S t(v). \label{eq:bellman} \end{equation} This quadratic state--action value is the Q-function for a scaled LQR problem. Partition $H=H_\alpha(S_j)$ by state and input coordinates. Starting from $S_0=0$, perform \begin{equation} K_j=-H_{uu}^{-1}H_{ux},\quad S_{j+1}=H_{xx}-H_{xu}H_{uu}^{-1}H_{ux}. \label{eq:vi} \end{equation} The input block is positive definite because $D^\top D\succ0$. The gain $K_j$ minimises the expression based on $S_j$; the updated value $S_{j+1}$ includes one more time step. \par
\begin{lemma}[Value bounds]\label{lem:vi} For each $0<\alpha<1$, iteration~\eqref{eq:vi} satisfies $0\preceq S_j\uparrow S_\star$, $K_j\to K_\star$, and \begin{equation} f(\alpha)=\frac{\tr(E^\top S_\star(\alpha)E)}{1-\alpha}, \qquad\rho(F_{K_\star})^2<\alpha. \label{eq:value} \end{equation} Moreover, $S_j(a)\succeq S_j(b)$ for $0<a\leq b\leq1$. \end{lemma}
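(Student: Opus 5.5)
By~\eqref{eq:virtual}, $t(v)=[A\ B]v$ and $o(v)=[C\ D]v$, so $H_\alpha(S)$ is exactly the Q-matrix of a scaled LQR problem. Setting $\tilde A=\alpha^{-1/2}A$ and $\tilde B=\alpha^{-1/2}B$, and using $C^\top D=0$, iteration~\eqref{eq:vi} becomes the standard Riccati difference recursion
\[
S_{j+1}=C^\top C+\tilde A^\top S_j\tilde A-\tilde A^\top S_j\tilde B\,(D^\top D+\tilde B^\top S_j\tilde B)^{-1}\tilde B^\top S_j\tilde A ,
\]
started from $S_0=0$. I would therefore reduce every claim to classical finite-horizon and infinite-horizon LQR facts for the pair $(\tilde A,\tilde B)$ with cost matrices $Q=C^\top C$ and $R=D^\top D\succ0$.

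\textbf{Monotonicity and convergence.} The quadratic form $x^\top S_jx$ is the optimal $j$-step cost $\min\sum_{k<j}\alpha^{-k}\|Cx_k+Du_k\|^2$ from $x_0=x$. Each additional stage adds a nonnegative term, so $S_j\preceq S_{j+1}$. Controllability of $(A,B)$ implies controllability of $(\tilde A,\tilde B)$, so some input sequence has finite infinite-horizon cost, and this cost bounds $S_j$ from above. A monotone sequence bounded above converges, giving $S_j\uparrow S_\star$, where $S_\star$ is a solution of the algebraic Riccati equation. Observability of $(C,A)$ makes $(C,\tilde A)$ observable, so $S_\star$ is the unique stabilising solution: the closed loop $\tilde A+\tilde BK_\star$ is Schur, which is exactly $\rho(F_{K_\star})^2<\alpha$. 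The gain $K_j$ depends continuously on $S_j$ and $H_{uu}\succeq D^\top D\succ0$ stays invertible, so $K_j\to K_\star$.

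\textbf{Value formula.} This is the step I expect to be the main obstacle, because it requires relating $J$ to $S$ through a Lyapunov duality. Fix $K$ with $\rho(F_K)^2<\alpha$, let $P$ solve~\eqref{eq:primal}, and let $S_K$ solve the dual equation $S_K=C_K^\top C_K+\alpha^{-1}F_K^\top S_KF_K$. Both series converge, and the trace pairing gives
\[
\tr(C_KPC_K^\top)=\tr(PC_K^\top C_K)=(1-\alpha)^{-1}\tr(E^\top S_KE).
\]
The comparison lemma for the Riccati equation gives $S_K\succeq S_\star$ for every admissible $K$, with equality at $K=K_\star$, which is admissible by the previous step. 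Hence the minimum defining $f(\alpha)$ is attained at $K_\star$ and equals $\tr(E^\top S_\star E)/(1-\alpha)$. Note that $\tr(E^\top S E)$ is monotone in $S$, so no weighting assumption on $E$ is needed for this step.

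\textbf{Monotonicity in $\alpha$.} I would argue by induction on $j$. The base case $S_0(a)=S_0(b)=0$ holds trivially. For the inductive step, the map $(S,\alpha)\mapsto\min_u\bigl(\|Cx+Du\|^2+\alpha^{-1}(Ax+Bu)^\top S(Ax+Bu)\bigr)$ is a pointwise minimum of functions that are nondecreasing in $S$ and nonincreasing in $\alpha$, because $S\succeq0$. So if $S_j(a)\succeq S_j(b)$ and $a\le b$, then $S_{j+1}(a)\succeq S_{j+1}(b)$. This argument also covers $b=1$: the claim concerns only the finite iterates, and convergence is not needed there.
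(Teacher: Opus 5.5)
Your proposal is correct and follows the paper's own proof: you reduce to the $\alpha^{-1/2}$-scaled LQR via the data identities, invoke standard Riccati-iteration convergence under controllability and observability, use the Lyapunov trace duality $J(\alpha,K)=\tr(E^\top S^K E)/(1-\alpha)$, and obtain the $\alpha$-ordering by induction through the Bellman minimisation. One useful difference is that you state the comparison $S^K\succeq S_\star$ for every admissible $K$ explicitly; this fills in the minimality step that the paper leaves implicit when it simply applies the trace identity to $K_\star$.
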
 \begin{proof} The data identities give \begin{equation} H_\alpha(S)=[C\ D]^\top[C\ D] +\alpha^{-1}[A\ B]^\top S[A\ B]. \label{eq:Hmodel} \end{equation} Completing the square in the input shows that $x^\top S_jx$ is the $j$-stage optimal cost for dynamics $(A,B)/\sqrt\alpha$ and stage cost $\|Cx+Du\|^2$. Controllability and observability are preserved by this scaling. Standard finite-horizon LQR convergence~\cite{Lewis2012} therefore gives the stabilising infinite-horizon value and gain. Its policy equation is \begin{equation} S^K=C_K^\top C_K+\alpha^{-1}F_K^\top S^K F_K. \label{eq:policy} \end{equation} For any admissible $K$, substitution into~\eqref{eq:objective} and cyclicity of the trace give \begin{equation*} \begin{aligned} J(\alpha,K) &=\tr\!\left[S^K\!\left(P-\alpha^{-1}F_KPF_K^\top\right)\right]\\ &=\frac{\tr(E^\top S^K E)}{1-\alpha}, \end{aligned} \end{equation*} where the second equality uses~\eqref{eq:primal}. Applying this identity to $K_\star$ proves~\eqref{eq:value}. The identity connects geometry and cost: $P$ describes the state enclosure, whereas $S^K$ evaluates the controller. The disturbance channel $E$ enters through the trace after this nominal cost is computed. Finally, Bellman minimisation preserves matrix order in both $S\succeq0$ and $1/\alpha$. Induction from zero proves the parameter ordering. \end{proof} \par
\subsection{Evaluating and accepting a controller} The iterate $S_j$ gives an optimal finite-horizon value, not the cost of repeatedly applying its greedy gain $K_j$. We therefore evaluate that gain separately. The finite-horizon value supplies a lower bound on the optimum; an admissible policy supplies an upper bound. Their difference tells us when this fixed-parameter computation can stop. \par
\begin{proposition}[Accepted policies and their ellipsoids]\label{prop:policy} Fix $0<\alpha<1$. For a candidate $K_j$, solve~\eqref{eq:policy} using the data-derived $F_{K_j},C_{K_j}$. Accept the candidate only when the equation has a unique symmetric solution $S^{K_j}\succ0$. Every accepted candidate satisfies $\rho(F_{K_j})^2<\alpha$, and \begin{equation} L_j=\frac{\tr(E^\top S_jE)}{1-\alpha} \ \leq f(\alpha)\leq\
 U_j=\frac{\tr(E^\top S^{K_j}E)}{1-\alpha}. \label{eq:LU} \end{equation} The solution $P\succeq0$ of~\eqref{eq:primal} defines an invariant ellipsoid and satisfies $J(\alpha,K_j)=U_j$. For every $\eta>0$, an accepted candidate with $U_j-L_j\leq\eta$ is obtained after finitely many iterations. \end{proposition}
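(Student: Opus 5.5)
The plan is to treat the four claims in order: acceptance implies stability, the sandwich~\eqref{eq:LU}, the ellipsoid identity, and finite termination. All of them reduce to Lemma~\ref{lem:vi} together with standard Lyapunov facts applied to the scaled pair $(F_{K_j}/\sqrt\alpha,\,C_{K_j})$. Equation~\eqref{eq:policy} is the discrete Lyapunov equation $S=C_K^\top C_K+\tilde F^\top S\tilde F$ with $\tilde F=F_K/\sqrt\alpha$.

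\textbf{Stability from acceptance.} Suppose $S^{K_j}\succ0$ solves~\eqref{eq:policy}. Take an eigenpair $\tilde F v=\lambda v$ with $|\lambda|\geq1$ and pair~\eqref{eq:policy} with $v$. This gives $(1-|\lambda|^2)v^*S^{K_j}v=\|C_Kv\|^2\geq0$. Positive definiteness then forces $|\lambda|=1$ and $C_Kv=0$.

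To exclude this case, I will use the structure $C^\top D=0$, $D^\top D\succ0$. These give $\|C_Kv\|^2=\|Cv\|^2+\|DKv\|^2$, so $Cv=0$ and $Kv=0$. Hence $F_Kv=Av=\sqrt\alpha\lambda v$, and $v$ is an unobservable eigenvector of $(C,A)$. This contradicts observability (PBH). So $\rho(F_{K_j})^2<\alpha$. Uniqueness of the solution is automatic once stability holds, so the acceptance test is exactly stability plus positivity.

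\textbf{Bounds and the ellipsoid.} Since $(\alpha,K_j)$ is now admissible, the trace identity in the proof of Lemma~\ref{lem:vi} gives $J(\alpha,K_j)=\tr(E^\top S^{K_j}E)/(1-\alpha)=U_j$. By definition of $f$ as a minimum over admissible gains, $f(\alpha)\leq U_j$. For the lower bound, Lemma~\ref{lem:vi} gives $S_j\preceq S_\star$, so $\tr(E^\top S_jE)\leq\tr(E^\top S_\star E)$, and~\eqref{eq:value} yields $L_j\leq f(\alpha)$.

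For the ellipsoid, $P=\sum_k\alpha^{-k}F_K^kEE^\top F_K^{\top k}/(1-\alpha)\succeq0$ converges by stability. Invariance follows by writing $F_Kx+Ew$ with $x=P^{1/2}v$. The convex combination
\[
F_KP^{1/2}v+Ew=\sqrt\alpha\,(\alpha^{-1/2}F_KP^{1/2}v)+\sqrt{1-\alpha}\,\bigl((1-\alpha)^{-1/2}Ew\bigr),
\]
together with $P=\alpha^{-1}F_KPF_K^\top+(1-\alpha)^{-1}EE^\top$, shows that the image lies in $\mathcal E(P)$. This is the standard Minkowski-sum/Cauchy--Schwarz argument: $a\mathcal E(Q_1)+b\mathcal E(Q_2)\subseteq\mathcal E(Q_1+Q_2)$ when $a^2+b^2\leq1$. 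I expect this step to need care for singular $P$, which I will handle via support functions, $h_{\mathcal E(P)}(y)=\sqrt{y^\top Py}$.

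\textbf{Finite termination.} I expect this to be the main obstacle. By Lemma~\ref{lem:vi}, $S_j\to S_\star$ and $K_j\to K_\star$ with $\rho(F_{K_\star})^2<\alpha$. By continuity of the spectral radius, $K_j$ is admissible for all large $j$.

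Positivity of $S^{K_j}$ needs observability of $(C_{K_j},F_{K_j})$. I will get this from the same PBH argument as above: an unobservable eigenvector would satisfy $Cv=0$, $Kv=0$ and $Av=\mu v$, contradicting observability of $(C,A)$. So every large-$j$ candidate is accepted.

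The solution of the Lyapunov equation depends continuously on $(F_K,C_K)$ on the stable set, and $F_K$, $C_K$ are affine in $K$. Hence $S^{K_j}\to S^{K_\star}$. This limit equals $S_\star$, because $S_\star$ solves the Riccati equation whose greedy gain is $K_\star$, so it satisfies~\eqref{eq:policy} for $K_\star$, and that solution is unique. Therefore $U_j-L_j=\tr(E^\top(S^{K_j}-S_j)E)/(1-\alpha)\to0$, and $U_j-L_j\leq\eta$ holds for some finite $j$. The delicate point is to justify $S^{K_\star}=S_\star$ and the continuity carefully. I would cite the standard LQR fact that the stabilising Riccati solution is the cost of its own gain.
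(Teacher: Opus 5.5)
Your proposal is correct and takes essentially the same route as the paper. It uses the same eigenvector/PBH argument that acceptance forces $\rho(F_{K_j})^2<\alpha$, the same sandwich via Lemma~\ref{lem:vi} and the trace identity, and the same Cauchy--Schwarz invariance argument; the paper simply packages your Minkowski-sum inclusion as the single factor $T=[F_KP^{1/2}/\sqrt\alpha\ \ E/\sqrt{1-\alpha}]$ with $TT^\top=P$. Your termination step also matches the paper's continuity argument near $K_\star$, and it is more explicit: you justify positivity of $S^{K_j}$ through closed-loop observability and identify $S^{K_\star}=S_\star$ through the Riccati fixed point.
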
 \begin{proof} Suppose $F_Kv=\lambda v$ for a nonzero possibly complex $v$. Equation~\eqref{eq:policy} implies \begin{equation} (1-|\lambda|^2/\alpha)v^*S^Kv =\|Cv\|^2+\|DKv\|^2. \label{eq:gateproof} \end{equation} For $S^K\succ0$, $|\lambda|^2>\alpha$ is impossible. Equality forces $Cv=Kv=0$, hence $Av=\lambda v$, contradicting observability. Thus the policy is admissible. The lower bound follows from Lemma~\ref{lem:vi}. The upper bound follows from the definition of $f(\alpha)$, and $J(\alpha,K_j)=U_j$ follows from the trace identity in that lemma's proof. \par
For invariance, write the next state as the image of a single unit-ball vector. The matrix \begin{equation*} T=\left[F_KP^{1/2}/\sqrt\alpha\quad E/\sqrt{1-\alpha}\right] \end{equation*} satisfies $TT^\top=P$, so it maps the unit ball onto $\mathcal E(P)$. If $x=P^{1/2}v$ with $\|v\|\leq1$, then $F_Kx+Ew=T[\sqrt\alpha\,v^\top\ \sqrt{1-\alpha}\,w^\top]^\top$ belongs to $\mathcal E(P)$ because the stacked vector has norm at most one. This also covers singular $P$. \par
By Lemma~\ref{lem:vi}, $K_j$ approaches a strictly admissible gain. In its neighbourhood the Lyapunov solution exists, is positive definite, and depends continuously on $K_j$. Consequently the acceptance test eventually succeeds, $S^{K_j}\to S_\star$, and $U_j-L_j\to0$. \end{proof} \par
A small increment $S_{j+1}-S_j$ alone cannot replace the acceptance test. Once $K_j$ is accepted, subtraction of the two Bellman identities gives \begin{equation} S^{K_j}-S_j=\sum_{t=0}^{\infty}\alpha^{-t} (F_{K_j}^t)^\top(S_{j+1}-S_j)F_{K_j}^t. \label{eq:residual} \end{equation} Thus the closed-loop dynamics determine how an increment propagates into the remaining cost. \par
An \emph{endpoint evaluation} at $(\alpha,\eta)$ runs~\eqref{eq:vi} and the policy test until it returns a gain $K$ and bounds $L,U$ satisfying \begin{equation*} \begingroup\medmuskip=2mu\thickmuskip=2mu \rho(F_K)^2<\alpha,\; L\leq f(\alpha)\leq J(\alpha,K)=U,\; U-L\leq\eta. \endgroup\end{equation*} In the outer search, $L(b)$ and $U(b)$ denote the bounds returned at $b$. Different endpoints may stop at different iteration indices. \par
\section{Certification over the Full Parameter Domain} \label{sec:global} \begin{figure*}[t] \centering\includegraphics[width=\textwidth]{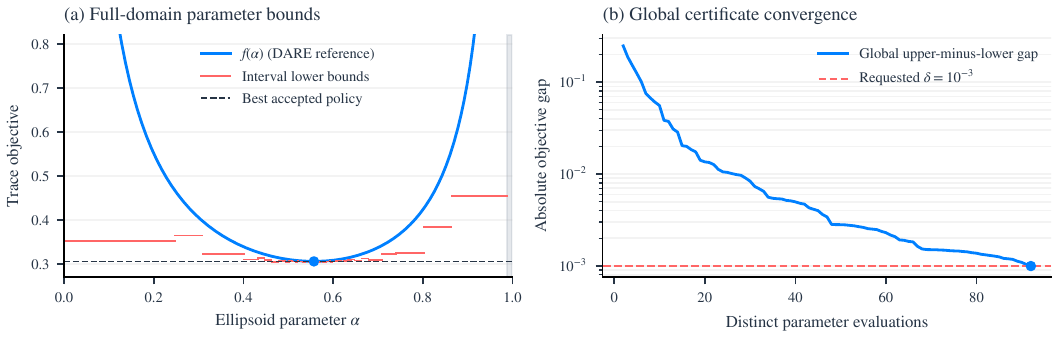} \caption{Full-domain search for~\eqref{eq:example}. Left: each red segment bounds the whole corresponding interval from below; the narrow shaded tail is excluded by~\eqref{eq:endpoint}. The model-based curve is a validation reference. Right: the global gap reaches the prescribed tolerance after $92$ parameter evaluations.} \label{fig:global} \end{figure*} \par
Bounds at a few parameter values do not exclude a better controller elsewhere. To obtain a global certificate, we bound the objective over whole intervals, including parameter values that have not been evaluated. \par
\subsection{Bounds for intervals and the upper tail} To bound an interval without sampling its interior, separate the objective into two factors that change in known directions with the parameter. By Lemma~\ref{lem:vi}, $(1-\alpha)f(\alpha)=\tr(E^\top S_\star(\alpha)E)$ is non-increasing in $\alpha$. On $[a,b]$, the right endpoint therefore bounds the numerator in~\eqref{eq:value} from below, while the left endpoint bounds the increasing factor $(1-\alpha)^{-1}$ from below. Their product bounds the objective throughout the interval. No convexity or unimodality of $f$ is needed. \par
\begin{lemma}[Interval and endpoint bounds]\label{lem:interval} For $I=[a,b]$ with $0\leq a<b<1$, any iterate at $b$ yields \begin{equation} \ell(I)=\frac{\tr(E^\top S_j(b)E)}{1-a} =\frac{1-b}{1-a}L_j(b) \leq\inf_{\alpha\in I\cap(0,1)} f(\alpha). \label{eq:interval} \end{equation} In addition, $n$ steps of~\eqref{eq:vi} at $\alpha=1$ give \begin{equation} c=\tr(E^\top S_n(1)E)>0,\qquad f(\alpha)\geq\frac{c}{1-\alpha}. \label{eq:endpoint} \end{equation} \end{lemma}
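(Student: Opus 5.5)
The plan is to deduce both claims from Lemma~\ref{lem:vi}: the value-iteration iterates increase monotonically toward $S_\star$, and they are ordered in the parameter.

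\emph{Interval bound.} Fix $\alpha\in I\cap(0,1)$, so $a\le\alpha\le b<1$ (with $\alpha>0$).
\begin{enumerate}
\item The parameter ordering of Lemma~\ref{lem:vi} applied with $\alpha\le b$ gives $S_j(\alpha)\succeq S_j(b)$ for the same index $j$.
\item Monotone convergence gives $S_\star(\alpha)\succeq S_j(\alpha)$.
\item Hence $\tr(E^\top S_\star(\alpha)E)\ge \tr(E^\top S_j(b)E)\ge0$, since congruence by $E$ and the trace preserve the order.
\item Since $1-\alpha\le 1-a$ and the numerator is nonnegative, $1/(1-\alpha)\ge 1/(1-a)$.
\item Inserting these into \eqref{eq:value} gives $f(\alpha)\ge \tr(E^\top S_j(b)E)/(1-a)=\ell(I)$.
\end{enumerate}
The second form of \eqref{eq:interval} follows directly from the definition of $L_j$ in \eqref{eq:LU} evaluated at $b$. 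The case $a=0$ needs no special care, because $\alpha>0$ throughout.

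\emph{Endpoint bound.} The inequality in \eqref{eq:endpoint} is the same argument with $b=1$ and $j=n$. The parameter ordering in Lemma~\ref{lem:vi} is stated for $0<a\le b\le1$, so $S_n(\alpha)\succeq S_n(1)$. Then $S_\star(\alpha)\succeq S_n(1)$, and \eqref{eq:value} gives $f(\alpha)\ge c/(1-\alpha)$.

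\emph{Strict positivity of $c$.} This is the only step that is not bookkeeping. By \eqref{eq:Hmodel}, $x^\top S_n(1)x$ is the optimal $n$-stage cost $\min_{u_0,\dots,u_{n-1}}\sum_{k=0}^{n-1}\|Cx_k+Du_k\|^2$ for the unscaled dynamics with $x_0=x$. Suppose this cost is zero for some $x\neq0$. Then each term vanishes. The condition $C^\top D=0$ splits $\|Cx_k+Du_k\|^2=\|Cx_k\|^2+\|Du_k\|^2$, and $D^\top D\succ0$ then forces $u_k=0$ and $Cx_k=0$. So $x_k=A^kx$ and $CA^kx=0$ for $k=0,\dots,n-1$, which contradicts observability of $(C,A)$. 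Therefore $S_n(1)\succ0$.

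Because $E\neq0$ by Assumption~\ref{ass:main}, it follows that $c=\tr(E^\top S_n(1)E)>0$. This step is the one to check carefully: the horizon $n$ is exactly what is needed for the observability matrix to have full rank, and the minimisation over inputs must be handled through the orthogonality condition $C^\top D=0$ rather than assumed away.
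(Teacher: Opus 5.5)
Your proposal is correct and follows the paper's proof. Both bounds rest on the same parameter-monotonicity argument, and you use the same zero-cost/observability argument (via $C^\top D=0$ and $D^\top D\succ0$) to get $S_n(1)\succ0$; the only cosmetic difference is that you chain $S_\star(\alpha)\succeq S_j(\alpha)\succeq S_j(b)$ rather than the paper's $S_\star(\alpha)\succeq S_\star(b)\succeq S_j(b)$, which uses the fixed-index ordering stated in Lemma~\ref{lem:vi} directly instead of its limit.
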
 \aftertheoremdisplay\begin{proof} For $\alpha\in I\cap(0,1)$, Lemma~\ref{lem:vi} gives $S_\star(\alpha)\succeq S_\star(b)\succeq S_j(b)$ and $(1-\alpha)^{-1}\geq(1-a)^{-1}$, proving~\eqref{eq:interval}. At $\alpha=1$, zero cost over $n$ stages forces all inputs and outputs to vanish, since $C^\top D=0$ and $D^\top D\succ0$. Observability then forces the initial state to vanish, so $S_n(1)\succ0$. Since $E\ne0$, $c>0$. Apply parameter monotonicity and the finite-horizon lower bound to obtain~\eqref{eq:endpoint}. \end{proof} \par
The calculation at $\alpha=1$ uses only a finite horizon to bound the tail; it requires no admissible infinite-horizon design at that endpoint. \par
An initial evaluation at $\alpha=1/2$ gives a usable policy with cost $U_{\rm s}$. Since $c/(1-\alpha)$ grows without bound near one, parameters sufficiently close to one cannot improve this cost. Set \begin{equation} b_0=1-c/U_{\rm s}\in[1/2,1). \label{eq:cutoff} \end{equation} Equation~\eqref{eq:endpoint} excludes $[b_0,1)$ from improving this policy. The remaining interval starts at zero, but only its positive right endpoints are evaluated. \par
\subsection{Refining the remaining intervals} The search keeps a partition of $[0,b_0]$ and the best controller found so far, called the incumbent. Its cost gives a global upper bound. The smallest interval bound, together with the excluded-tail bound, gives a global lower bound. Their difference measures the improvement that is still possible. We split an interval attaining the smallest lower bound and evaluate its new midpoint. Once the global bounds differ by at most $\delta$, the incumbent is sufficiently close to optimal. \par
\begin{theorem}[Finite global certificate]\label{thm:global} Under Assumption~\ref{ass:main}, Algorithm~\ref{alg:global} terminates for every $\delta>0$ and returns $(\widehat\alpha,\widehat K,\widehat P,\underline J,\overline J)$ satisfying \begin{gather} \rho(F_{\widehat K})^2<\widehat\alpha<1, \qquad\mathcal E(\widehat P)\text{ is invariant},\label{eq:guarantee1}\\ \underline J\leq J_\star\leq J(\widehat\alpha,\widehat K) =\overline J,\qquad\overline J-\underline J\leq\delta. \label{eq:guarantee2} \end{gather} No initial stabilising gain or interior minimiser is required. \end{theorem}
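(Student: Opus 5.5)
Since Algorithm~\ref{alg:global} is not reproduced before the statement, I take it to do the following. It evaluates $\alpha=1/2$ and computes $c$ and $b_0$ by \eqref{eq:cutoff}. It keeps a partition of $[0,b_0]$ into intervals $I=[a,b]$, each carrying the endpoint evaluation at $b$ with tolerance $\eta$, the interval bound $\ell(I)$ from \eqref{eq:interval}, and an incumbent with $\overline J=\min U$. The global lower bound is $\underline J=\min\{\min_I\ell(I),\,U_{\rm s}\}$, the second term accounting for the tail $[b_0,1)$. The algorithm repeatedly bisects an interval attaining $\min_I\ell(I)$ and stops once $\overline J-\underline J\leq\delta$.

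\emph{Soundness.} Proposition~\ref{prop:policy} shows that every endpoint evaluation terminates after finitely many iterations. It also gives $\rho(F_K)^2<b$, an invariant ellipsoid, and $J(b,K)=U(b)$. The incumbent therefore satisfies \eqref{eq:guarantee1}, and $J_\star\leq f(\widehat\alpha)\leq\overline J$. For the lower bound, the intervals cover $(0,b_0]$, so by Lemma~\ref{lem:interval} we have $\inf_{(0,b_0]}f\geq\min_I\ell(I)$. On $[b_0,1)$, \eqref{eq:endpoint} gives $f(\alpha)\geq c/(1-b_0)=U_{\rm s}$. Hence $J_\star\geq\underline J$. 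The same relations give $\underline J\leq J_\star\leq\overline J$ at every stage.

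\emph{Termination.} This is the main obstacle, because the infimum may sit at $a=0$, where the factor $(1-a)^{-1}$ gives no help and $f$ may be unattained. I would argue by contradiction and suppose the search never stops. Take $\eta=\delta/2$ per endpoint; the algorithm may shrink this, and the argument only needs $\eta<\delta$.

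The key estimate compares $\ell(I)$ with the incumbent. Because $U(b)\geq\overline J$ and $U(b)-L(b)\leq\eta$,
\begin{equation*}
\overline J-\ell(I)\leq U(b)-\frac{1-b}{1-a}L(b)\leq \eta+\Bigl(1-\frac{1-b}{1-a}\Bigr)L(b)=\eta+\frac{b-a}{1-a}\,L(b).
\end{equation*}
We also need a uniform bound on $L(b)$. Since $L(b)\leq f(b)$, and $(1-b)f(b)$ is non-increasing in $b$ by Lemma~\ref{lem:vi}, we get $L(b)\leq f(b)\leq (1-\beta)f(\beta)/(1-b_0)$ for any $\beta$ below the smallest evaluated endpoint. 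This is too crude, since $f(\beta)$ may blow up as $\beta\downarrow0$. A cleaner route is to bound $L(b)$ by the incumbent: $L(b)\leq U(b)$, and along the branch being refined $\ell(I)\leq\overline J$ together with $L(b)=\frac{1-a}{1-b}\ell(I)\leq \overline J/(1-b_0)$ gives $L(b)\leq U_{\rm s}/(1-b_0)$. Combining the two displays,
\begin{equation*}
\overline J-\ell(I)\leq \eta+\frac{U_{\rm s}}{(1-b_0)^2}\,(b-a).
\end{equation*}

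Best-first bisection always selects an interval with minimal $\ell$. If the search never stops, every selected interval has gap above $\delta$. Each bisection halves the width, and the number of intervals of width at least $w$ stays bounded by $b_0/w$. So after finitely many steps the selected interval has width below $(\delta-\eta)(1-b_0)^2/U_{\rm s}$. At that point the gap is at most $\delta$, because $\underline J\geq\min(\ell(I),U_{\rm s})$ and the tail term satisfies $\overline J\leq U_{\rm s}$. This is a contradiction, so the algorithm stops with $\overline J-\underline J\leq\delta$.

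One point needs checking: a child interval's $\ell$ uses the new evaluation, which need not be monotone relative to the parent. Refinement must not lower $\min\ell$ in a way that breaks the argument, but the width bound above applies to whichever interval is selected, so the argument survives.
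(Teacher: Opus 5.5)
Your proposal is correct and follows the paper's proof: soundness comes from Proposition~\ref{prop:policy}, Lemma~\ref{lem:interval} and the tail cutoff, and termination comes from a width estimate $\overline J-\ell(I)\leq\eta+C\,(b-a)$ that forces every selected interval to be wider than a fixed $h>0$, and the dyadic tree on $[0,b_0]$ has only finitely many such intervals. The only difference is that the paper splits $\overline J-qL(b)\leq q\eta+(1-q)\overline J$, which avoids your detour of bounding $L(b)$ through $\ell(I)\leq\overline J$ and gives the constant $U_{\rm s}/(1-b_0)$ rather than $U_{\rm s}/(1-b_0)^2$.
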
 \begin{proof} Proposition~\ref{prop:policy} makes every endpoint evaluation finite and every incumbent admissible. Lemma~\ref{lem:interval}, the excluded-tail bound $U_{\rm s}$, and the covering partition imply $\underline J\leq J_\star\leq\overline J$ throughout. \par
It remains to prove termination, including near zero. Write $\eta=\delta/4$ and consider any current interval $I=[a,b]$. Its endpoint bounds obey $U(b)-L(b)\leq\eta$ and $U(b)\geq\overline J$, since the incumbent is the smallest visited policy cost. With $q=(1-b)/(1-a)$ and $\ell(I)=qL(b)$, \begin{align} \overline J-\ell(I) &\leq\underbrace{q\eta}_{\text{endpoint error}} +\underbrace{(1-q)\overline J}_{\text{interval contribution}}\nonumber\\ &\leq\eta+\frac{b-a}{1-b_0}U_{\rm s}. \label{eq:width} \end{align} The endpoint evaluation controls the first term; bisection controls the second. A sufficiently narrow interval cannot lie more than $\delta$ below the incumbent. Hence, if the stopping test fails, the selected interval must have width greater than \begin{equation} h_\star=(\delta-\eta)(1-b_0)/U_{\rm s}>0. \label{eq:hstar} \end{equation} Each split halves the interval width. Bisection therefore cannot select any interval at depth $d_\star=\max\{0,\lceil\log_2(b_0/h_\star)\rceil\}$. There are at most $2^{d_\star}-1$ shallower nodes. The estimate applies unchanged to $[0,b]$: zero is never evaluated, and no uniform inner-iteration bound near zero is needed. Only finitely many endpoints are visited, and each evaluation terminates. At termination the stopping test gives~\eqref{eq:guarantee2}; Proposition~\ref{prop:policy} gives~\eqref{eq:guarantee1} and the returned ellipsoid. \end{proof} \par
\begin{algorithm}[t] \small\caption{Globally certified ellipsoid design} \label{alg:global} \begin{algorithmic}[1] \REQUIRE Exact batch satisfying Assumption~\ref{ass:main}; $\delta>0$. \STATE Construct~\eqref{eq:virtual}--\eqref{eq:maps}; set $\eta=\delta/4$. \STATE Evaluate $(1/2,\eta)$; store the returned policy and cost $U_{\rm s}$ as incumbent. \STATE Compute $c$ and $b_0$ from~\eqref{eq:endpoint}, \eqref{eq:cutoff}. \STATE Evaluate $(b_0,\eta)$; initialise the partition $\mathcal I=\{[0,b_0]\}$ and its bound~\eqref{eq:interval}. \LOOP\STATE Update $\overline J$ and its policy from all visited upper values; set $\underline J=\min\{U_{\rm s},\min_{I\in\mathcal I}\ell(I)\}$. \IF{$\overline J-\underline J\leq\delta$} \STATE Solve~\eqref{eq:primal} for the incumbent and return the quantities in Theorem~\ref{thm:global}. \ENDIF\STATE Bisect an interval of smallest $\ell(I)$; evaluate its midpoint with tolerance $\eta$. \STATE Replace the parent by its two children and compute~\eqref{eq:interval}, reusing the old right endpoint. \ENDLOOP\end{algorithmic} \end{algorithm} \par
\begin{figure*}[t] \centering\includegraphics[width=\textwidth]{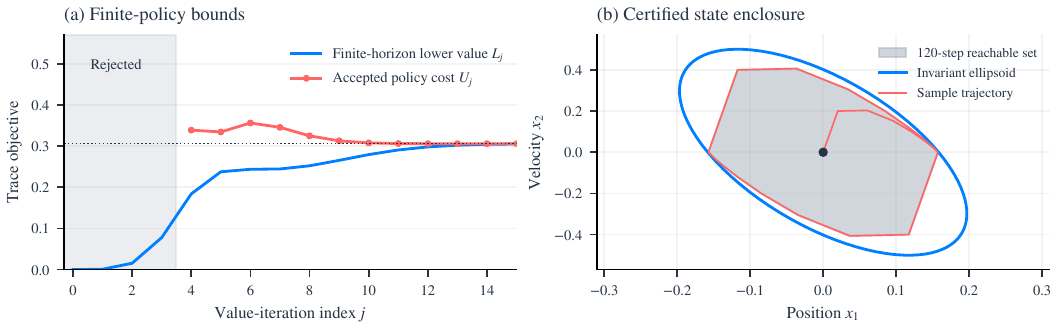} \caption{Policy and geometric checks at the returned parameter. Left: early greedy gains are rejected; accepted policy costs need not decrease at every value update. Right: the grey polygon is an inner approximation of the $120$-step reachable set from $x_0=0$, formed by joining support points in $900$ directions. The trajectory also starts at $x_0=0$ and uses $w_k=(-1)^{\lfloor k/13\rfloor}$. Containment in the ellipsoid for every admissible disturbance follows from Proposition~\ref{prop:policy}, not from this trajectory.} \label{fig:policy} \end{figure*} \par
\par
The guarantee concerns exact data and exact arithmetic. Floating-point rank decisions and Lyapunov solves require conditioning and residual checks; they do not constitute outward-rounded numerical certification or robustness to unmeasured errors. The interval construction can also be combined with a model-based Riccati solver that supplies valid endpoint bounds. \par
\section{Numerical Study} \label{sec:numerics} \subsection{A disturbed position--velocity system} Consider a sampled position--velocity system: \begin{equation} A=\begin{bmatrix}1&0.2\\0&1\end{bmatrix},\quad B=E=\begin{bmatrix}0.02\\0.2\end{bmatrix},\quad z_k=\begin{bmatrix}x_{1,k}\\0.3u_k\end{bmatrix}. \label{eq:example} \end{equation} This models position and velocity with a sampling period of $0.2$ and an unknown bounded acceleration acting through the input channel. Position is penalised; velocity is observed through the dynamics, so the state penalty is semidefinite. \par
The batch contains $20$ open-loop episodes of $8$ steps each. The initial state of each episode is drawn uniformly from $[-1,1]^2$, and inputs and measured disturbances are independent uniform samples in $[-1,1]$. Resetting after eight steps limits open-loop growth; it is an experimental capability, not a learned safety guarantee. The data matrix has rank $4$ and condition number $1.70$. Only the five measured arrays are passed to the learner; the matrices in~\eqref{eq:example} are used separately for validation. \par
Reported numerical values are rounded; all checks use the unrounded quantities. With $\delta=10^{-3}$ and $\eta=\delta/4$, the computation returns $\widehat\alpha=0.5572$ and \begin{equation} \begin{aligned} \widehat K&=\begin{bmatrix}-6.3577&-4.2834\end{bmatrix},\\ \widehat P&=\begin{bmatrix}0.0388&-0.0587\\-0.0587&0.2504\end{bmatrix}. \end{aligned} \label{eq:returned} \end{equation} The unrounded controller gives the bounds and gap \begin{equation} \begin{aligned} \underline J&=0.30464773,\quad\overline J=0.30564354,\\ \overline J-\underline J&=9.9581\times10^{-4}. \end{aligned} \label{eq:numericalgap} \end{equation} The weighted stability margin $1-\rho(F_{\widehat K})^2/\widehat\alpha$ is $0.5146$. The search uses $92$ distinct parameters, $90$ bisections, and $1569$ value updates. The normalised primal Lyapunov residual and primal--dual trace discrepancy are below $3\times10^{-16}$. These are floating-point diagnostics; the bound proved in Theorem~\ref{thm:global} is an exact-arithmetic statement. \par
Figure~\ref{fig:global} shows the interval lower bounds and the closing global gap. Figure~\ref{fig:policy} distinguishes finite-horizon critic values from accepted policy performance and displays the returned geometric enclosure. An independent DARE calculation gives the reference value $0.30564346$ near $\alpha=0.5574$. Agreement alone would not supply the lower bound in~\eqref{eq:numericalgap}. \par
\subsection{Matched computations and diagnostic cases} Table~\ref{tab:comparison} compares value iteration (VI) and Riccati (DARE) computations using the same batch and criterion. The local methods use bounded scalar minimisation on $[0.02,0.98]$, with parameter tolerance $10^{-9}$. The local VI baseline stops when \begin{equation*} \frac{\|S_{j+1}-S_j\|_F}{1-\alpha+\|S_j\|_F}\leq10^{-10}, \end{equation*} where $\|\cdot\|_F$ is the Frobenius norm. Its scalar search uses $\tr(E^\top S_{j+1}E)/(1-\alpha)$; the reported objective is obtained by evaluating the returned gain separately. The DARE global method uses the same interval bounds with exact Riccati endpoint values in theory. Numerical policy evaluation rejects Lyapunov operators with condition number above $10^{10}$. \par
Computations use Python 3.12.14, NumPy 2.5.3, and SciPy 1.18.1 on Windows 11 with an Intel Core i7-1185G7 processor at 3.00~GHz. Riccati and policy evaluations use SciPy's discrete Riccati and Lyapunov solvers. Timings are medians of three runs with one BLAS thread; preprocessing takes less than $0.3$~ms in each case. \par
\begin{table}[t] \vspace{5pt} \caption{Same-data comparison for $\delta=10^{-3}$} \label{tab:comparison} \centering\begin{tabular}{@{}lrrr@{}} \toprule Method & Evals. & Time (ms) & Global gap\\ \midrule VI / local & 12 & 12.3 & unavailable\\ DARE / local & 13 & 9.1 & unavailable\\ VI / certified & 92 & 306.8 & $9.96\!\times\!10^{-4}$\\ DARE / certified & 88 & 58.9 & $9.88\!\times\!10^{-4}$\\ \bottomrule\end{tabular} \par
\smallskip\parbox{\columnwidth}{\footnotesize Gaps are computed in floating-point arithmetic; Theorem~\ref{thm:global} assumes exact arithmetic.} \end{table} All four objectives lie within $9\times10^{-8}$ of the DARE reference. The local computations are faster but do not bound unexplored parameters. The matched global DARE computation is also faster than value iteration: the proposed contribution is the finite global certificate and its data-based realisation, not computational superiority. \par
Two scalar cases test failures that a nominal convergence plot misses. First, take $A=1.1$, $B=E=1$, $z=[10^{-7}x\ u]^\top$, and $\alpha=1/2$. The first matrix increment is $S_1-S_0=10^{-14}$, and the corresponding lower-bound increment is $L_1-L_0=2\times10^{-14}$. Either increment is below $10^{-12}$, although $K_0=0$ leaves the pole at $1.1$. In contrast,~\eqref{eq:policy} gives $S^{K_0}=10^{-14}/(1-1.1^2/\alpha)<0$, so Proposition~\ref{prop:policy} rejects the gain. Second, for the boundary example after~\eqref{eq:optimum}, tolerances $10^{-2},10^{-3},10^{-4}$ require respectively $7,10,14$ parameter evaluations. The returned parameters are $2^{-7},2^{-10},2^{-14}$ and the corresponding gaps to $J_\star=1$ are $7.874\times10^{-3}$, $9.775\times10^{-4}$, and $6.104\times10^{-5}$. No positive lower cutoff is imposed. \par
\section{Conclusion} The proposed procedure uses a finite batch of exact data collected under measured disturbances to compute a feedback gain, a scalar parameter, and their associated invariant ellipsoid, together with a finite global objective gap. Its key step is the combination of policy upper bounds with normalised-value interval lower bounds. The resulting width estimate proves termination over the entire open parameter domain and accommodates an unattained boundary infimum. The certificate also applies to model-based Riccati evaluations; its benefit is an explicit performance guarantee rather than a speed or information advantage. Extending the bounds to measurement uncertainty and verified numerical arithmetic remains a separate problem. \par
\par
\section*{Acknowledgment} \textit{AI use statement:} This letter was written with the assistance of ChatGPT 5.6 Sol. Working with AI was an iterative process, making it difficult to isolate its specific contribution. \par
 \par
\end{document}